\newtheorem{proposition}{Proposition}
\title{Modeling Two-Scale Rank Distributions via Redistribution Dynamics or an Analytic Derivation of the Beta Rank Function}
\author[1]{Oscar Fontanelli\thanks{Corresponding author: oscar.fontanelli@flacso.edu.mx}}
\author[2,3]{Wentian Li}
\affil[1]{\textit{Facultad Latinoamericana de Ciencias Sociales México (FLACSO México), Mexico City, Mexico}}
\affil[2]{\textit{Department of Applied Mathematics and Statistics, Stony Brook University, Stony Brook, NY, USA}}
\affil[3]{\textit{The Robert S. Boas Center for Genomics and Human Genetics\\
The Feinstein Institutes for Medical Research, Northwell Health, Manhasset, NY, USA}}
\date{January 2026}
\begin{document}

\maketitle

\begin{abstract}
Beta Rank Function (BRF) is a two-sided distribution characterized by a smooth peak and double power-law decay, widely used to model empirical data exhibiting deviations from pure power laws. In this paper, we introduce a novel two-step generative process that produces data exactly following the BRF distribution. The first step involves any mechanism generating a power-law distribution, while the second step applies a regressive redistribution process that reallocates resources from poorer to richer entities, thereby amplifying inequality. This approach represents the first analytic derivation of an exact BRF distribution from a generative mechanism. We validate the model through applications to income and urban population distributions. Beyond exact generation, this framework offers new insights into the systemic origins of deviations from power laws frequently observed in complex systems, linking rank distributions to underlying feedback and redistribution dynamics.

\end{abstract}

Keywords: power law; Pareto distribution; Beta Rank Function; Discrete Generalized Beta Distribution; multi-scaling in complex systems

\section{Introduction}

Long or heavy-tailed distributions \cite{anderson2006long}, such as power laws or stretched exponentials, are pervasive across natural, social and engineered systems, serving as essential tools for statistical data analysis in complex systems \cite{newman2005power, schroeder2009fractals}. Among these, the Discrete Generalized Beta Distribution (DGBD) and its associated Beta Rank Distribution (BRF) have emerged as highly flexible models that empirically capture deviations from pure power laws frequently observed in rank-size data \cite{gustavo}. These deviations, manifested as elbows or curvature in log-log rank-size plots, reflect a more complex underlying multi-scale dynamics and emergent behaviors \cite{clauset}.

Despite numerous empirical applications of the DGBD/BRF across domains — including city populations \cite{ghosh}, administrative unit populations \cite{oscar-au}, linguistic data \cite{wli-jql2, wli-jql, wli-entropy, wli-physica}, rank-citation profiles of scientists \cite{petersen}, centrality measures in mobility networks \cite{fontanelli2023intermunicipal}, log-returns of financial indexes \cite{fontanelli2020distribuciones}, among others — there has been a notable absence of theoretical mechanisms that generate an exact DGBD/BRF distribution. This gap limits the understanding of the systemic processes that give rise to such two-scale behaviors in complex systems.

In this work, we address this fundamental gap by proposing a novel two-step generative process that analytically produces the exact BRF distribution. The first step involves any mechanism generating a power law, a well-established signature of scale-free phenomena in complex systems \cite{mitzenmacher2004brief, sornette2006critical, sornette2009, lux2016financial}. The second step introduces a regressive redistribution process, reallocating resources asymmetrically from poorer to richer entities, thereby intensifying inequality and breaking the original scale invariance. This redistribution acts as a system-level feedback mechanism, modulating the initial scale-free distribution and producing emergent two-scale behavior, which is captured by the two-exponent BRF \cite{fontanelli2022beta}.

Methodologically, our approach uses quantile functions, which are the inverse of cumulative distribution functions, marking a departure from traditional probability density function or rank function analyses. By modulating the quantile function of a power law distribution, we achieve an exact analytical expression for the BRF, providing a rigorous mathematical framework to model deviations from power laws. This quantile-based methodology represents an innovation in complex systems modeling, enabling precise characterization of rank-based data and emergent distributional features.

A complex system has two characteristic scales when distinct behaviors ore dynamics occur at two different length or time scales. The two parameters of the BRF distribution effectively capture the behavior on these type of systems. By situating the BRF within this framework, our model offers new insights into inequality dynamics and systemic redistribution processes in complex adaptive systems. This work not only fills a theoretical void but also enhances the interpretability and applicability of the DGBD/BRF family as a versatile tool for analyzing ranked data across diverse complex systems.

The structure of this paper is the following: on section \ref{section_background_1} we review the known generating mechanisms for the BRF/DGBD; since the BRF distribution is defined through its quantile function, we give on section \ref{section_background_2} a quick review of quantile function modelling and give a formal definition of the BRF distribution; on section \ref{section_results_1} we state our main result, showing that our proposed transformation of a power law exactly produces a BRF; on section \ref{section_results_2} we show some numerical simulations to illustrate some relevant aspects of the transformation; on sections \ref{section_results_3} and \ref{section_results_4} we give empirical examples; finally, we discuss the implications of our model. 

\section{Background}

\subsection{Some previously proposed generating mechanisms for the DGBD rank function/BRF distribution}
\label{section_background_1}

The Discrete Generalized Beta Distribution (DGBD) rank function is defined as

\begin{equation}
x(r) = C \frac{(N+1-r)^b}{r^a},
\label{DGBD}
\end{equation}

\noindent where $x$ is the size of an observation, $r$ is the rank, $N$ the number of observations, $C$ a scale parameter, and $a, b \geq 0$ are shape parameters controlling decay rates on each side of the mode. The Beta Rank Function (BRF) is the associated probability distribution characterized by this rank function. The probability distribution function (pdf) of the BRF does not have an analytic expression \cite{fontanelli2022beta} except for some particular cases like $a=b$ \cite{fontanelli2016beyond}. Unlike the Pareto distribution, the BRF exhibits a two-sided power law behavior with distinct exponents on either side of its mode, making it highly flexible for modeling rank-size data with deviations from pure power laws.\\

Several mechanisms have been proposed to generate the DGBD/BRF, including:

\begin{enumerate}
\item Ranking of multinomial processes: complex systems composed of many subsystems produce macroscopic states whose rank distributions approximate the DGBD when the state space is large \cite{naumis2008tail}.

\item Expansion-modification algorithm: a dynamical system modeling duplication and mutation processes generates sequences whose rank-size distributions fit the DGBD, with parameters linked to persistence and change \cite{li1991expansion}.

\item Split-merge model: iterative splitting of the largest and merging of the smallest observations transforms an initial power law into a distribution well described by the DGBD, with shape parameters associated with merging and splitting dynamics \cite{li2016size}.

\item Bounded random subtraction: a stochastic process of iteratively subtracting random values under positivity constraints converges numerically to a heavy-tailed attractor approximated by the DGBD \cite{del2011general}.

\item Birth-death processes on networks: master equations modeling birth-death dynamics yield steady-state solutions fitting the DGBD rank function \cite{alvarez2014birth}.

\item Symbolic dynamics from unimodal mappings: length distributions of n-tuples generated by unimodal maps numerically follow the DGBD, with analytical support from thermodynamic formalism \cite{alvarez2018rank}.
\end{enumerate}

Despite these diverse approaches, none provide an exact analytical mechanism to generate the DGBD/BRF distribution. Addressing this gap, the present work proposes a two-step generative process that produces the exact BRF analytically, representing a theoretical advance in understanding the origins of two-scale behaviors in complex systems.

\subsection{Basic aspects of quantile function modeling}
\label{section_background_2}

Rank-size functions characterize a distribution by arranging empirical observations in descending order and expressing each value as a function of its rank. In contrast, quantile functions describe theoretical distributions by mapping probabilities to values, ensuring that the probability of the variable being less than or equal to a given value corresponds to the quantile. Rank-size plots effectively represent an empirical quantile function derived from a ranked sample, where rank aligns with empirical probability levels (but with opposite directions). Conversely, quantile functions serve as a continuous theoretical framework for describing distributional patterns. This section briefly introduces fundamental concepts of quantile functions.

The cumulative distribution function (cdf) $F_X(x)$ of a random variable $X$ is the total probability $p$ for that variable being smaller than a specific value $X=x$. The inverse of the cdf is called the quantile function $Q(p)$ and it gives the value $x$ given the total probability $p$ up to this value. If we think of the cdf as a function that maps $x \rightarrow p$, then the quantile function maps $p \rightarrow x$. The cut point $x$ is called the $p-$quantile of the distribution. To summarize, 

\begin{eqnarray}
cdf: & & F_X(x)= P[X \le x_p]=p \nonumber \\
quantile function: & & Q(p)= F_X^{-1}(p)=x_p. \nonumber
\end{eqnarray}

\noindent If $F$ is not strictly monotonic, the quantile function is defined as
$$
Q(p) = \inf \{x\in \mathbb{R} | p \leq F(x)\}.
$$

\noindent The discrete version of the quantile function is related, with negative direction, to the rank-size plot of the empirical data, similar to the probability density function (pdf) being related to the histogram of the empirical data.  Let $X_1,...,X_N$ by \emph{iid} random variables (a random sample from a certain distribution). By means of the empirical cumulative distribution function
$$
\displaystyle \hat{F}(x) = \frac{1}{N}\sum_{i=1}^N 1[X_i \leq x],
$$
\noindent we can define the \emph{rank} of an observation $x$ within a sample of size $N$ as
$$
r(x) = 1+N(1-\hat{F}(x)).
$$ 

\noindent This is consistent with the common practice, where $r(x_{max})=1$ and $r(x_{min})=N$. Note that, if the sample comes from a continuous distribution, the probability of ties is zero. The rank value can be converted to total probability $p$ by $p = (N+1-r)/N $. 
Note that cdf refers to the total probability for $X$ being smaller
than a specific value. When the largest value is observed, $p=1$, $X$ reaches its maximum value. But the largest value has rank=1, and number 1 is the smallest possible value.  Therefore, we say the rank variable $r$ and total probability $p$ have the opposite directions.

Fig. \ref{plot_examples} shows the pdf and quantile function for a lognormal distribution, as well as the histogram and rank-size plot of a sample from this distribution (we chose this distribution for clarity in the visualization). As expected, the histogram resembles the density function (it misses the shape near $x=0$ due to finite bin sizes), whereas empirical rank-size plot resembles the theoretical quantile function after reflection on the y-axis at $x=0$.  

\begin{figure}[tp]
\centering
\includegraphics[width=0.5\linewidth]{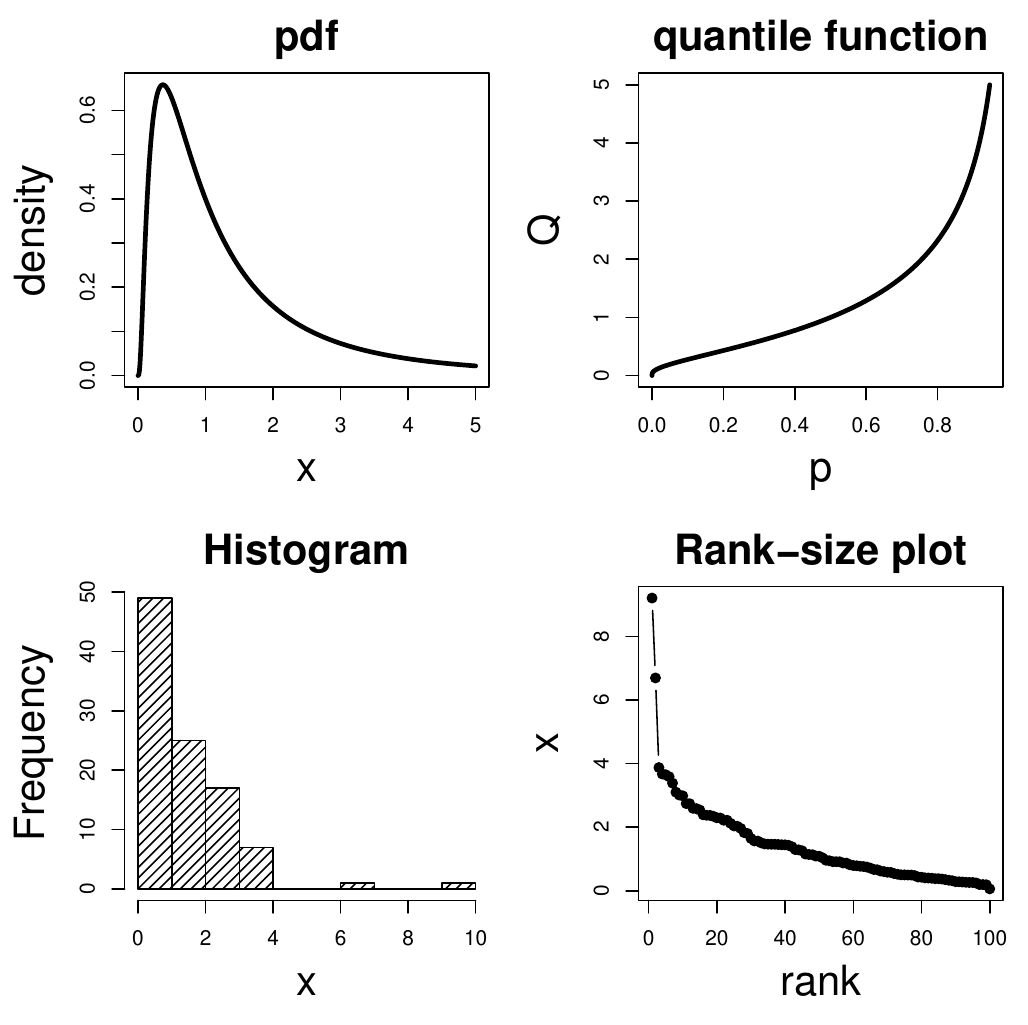}
\caption{\textbf{Quantile - rank size function correspondence.} Top panels show the probability density and the quantile function of a lognormal distribution with standard parameters. Bottom panels show the histogram and rank-size plot of a random sample (size=100) from the same distribution.}
\label{plot_examples}
\end{figure}

For the purposes of this work, we will need the quantile function of two distributions: the Pareto or power law and the BRF. If X is a random variable with a Pareto distribution ($X \sim Pareto(x_m, a)$),  the pdf, cdf, and quantile function are:

\begin{eqnarray}
pdf & & f_X(x) = \frac{(x_m)^{1/a}}{a} \frac{1}{x^{1/a+1}} 
\mbox{for x $\in [x_m,\infty$)} \nonumber \\
cdf: & & F_X(x) = 1- \left( \frac{x_m}{x}\right)^{1/a}. \nonumber \\
quantile function: & & Q_X(p)= \frac{x_m}{(1-p)^a} \nonumber
\end{eqnarray}

For BRF distribution with parameters $C$, $a$ and $b$

\begin{eqnarray}
pdf : & & 
\mbox{no analytic expression} \nonumber \\
cdf: & & \mbox{no analytic expression} \nonumber \\
quantile function: & & Q_{BRF}(p)= C \frac{p^b}{(1-p)^a},  \nonumber
\end{eqnarray}

\noindent where $C>0$ is a scale parameter, while $a\geq 0$ and $b\geq 0$ are shape parameters, each controlling the decay on one side of the mode.  This distribution is also known in the literature as the Davis distribution \cite{hankin} or the power-Pareto distribution \cite{gilchrist2000statistical}. 

In brief, quantile functions are used to describe the distribution of values within a dataset or system by focusing on the ranks, an approach that is well known in complex systems. Intuitively, a quantile function answers the question: ``Given a certain probability, what is the value below which a specified fraction of the data falls?'' For example, the median is the value at the 50th percentile, meaning half the data lies below it, and the quantile function maps 0.5 to the median value. As we will demonstrate in the next section, quantile functions serve as a valuable tool for modeling deviations from power laws and emergent behavior.

\section{Results}

\subsection{Modulated Pareto distribution yields a BRF}
\label{section_results_1}

Now we state the main result of this work. This result comes from the observation made by Gilchrist \cite{gilchrist2000statistical} that BRF quantile function is the product of two quantile functions (a Pareto and a power distribution) and a result by Sankaran \cite{sankaran2018pareto} on how the interpret the product of two quantile functions. 

\begin{proposition}

Let $X \sim \mbox{Pareto}(x_m,a)$. The random variable $Y$ defined by multiplying $X$ with its cumulative density function to the $b$-th power ($b>0$)
$$
Y = X[F_X(X)]^b
$$
\noindent follows a BRF distribution with parameters $x_m$, $a$ and $b$.
\end{proposition}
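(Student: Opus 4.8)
The plan is to argue entirely at the level of quantile functions, using the fact that $Y$ is a deterministic, strictly increasing transformation of $X$. First I would set $g(x) = x\,[F_X(x)]^b$ so that $Y = g(X)$, and check that $g$ is strictly increasing on the support $[x_m,\infty)$: the factor $x$ is positive and increasing, while $F_X$ is a continuous, strictly increasing cdf taking values in $[0,1)$ there, so their product increases, with $g(x_m)=0$ and $g(x)\to\infty$. Monotonicity is what makes the subsequent quantile algebra legitimate rather than merely formal, so pinning it down is the first task.

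The central step is the change-of-variables rule for quantile functions: for strictly increasing $g$, the variable $Y=g(X)$ has $Q_Y(p)=g(Q_X(p))$. I would obtain this in one line from $F_Y(y)=P(g(X)\le y)=F_X(g^{-1}(y))$ followed by inversion. Substituting the definition of $g$ then yields
\[
Q_Y(p) = Q_X(p)\,\bigl[F_X(Q_X(p))\bigr]^b .
\]
Here the decisive ingredient is the probability integral identity $F_X(Q_X(p))=p$, which holds because the Pareto cdf is continuous and strictly increasing, making $F_X$ and $Q_X$ genuine mutual inverses. This reduces the bracket to $p^b$ and leaves $Q_Y(p)=Q_X(p)\,p^b$.

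Finally I would insert the explicit Pareto quantile $Q_X(p)=x_m/(1-p)^a$ to obtain
\[
Q_Y(p)=\frac{x_m}{(1-p)^a}\,p^b = x_m\,\frac{p^b}{(1-p)^a},
\]
which is exactly $Q_{BRF}(p)$ with scale $C=x_m$ and the same shape parameters $a$ and $b$. Since a law is uniquely determined by its quantile function, this identifies $Y$ as the claimed BRF. It also exhibits the Gilchrist--Sankaran reading concretely: the factor $p^b$ is itself the quantile function of a standard power-function distribution, so multiplying $X$ by $[F_X(X)]^b$ realizes, at the sample level, the product of a Pareto quantile with a power quantile.

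I do not expect a genuine obstacle here, since the proof is short once the framework is in place. The only points requiring care are the two regularity facts underpinning the quantile manipulations — the strict monotonicity of $g$ and the exactness of $F_X(Q_X(p))=p$ — both guaranteed by the Pareto law being continuous with a strictly increasing cdf on its support. The main conceptual step, rather than any computational hurdle, is justifying the passage from an identity about products of quantile \emph{functions} to a statement about the \emph{distribution} of the explicitly constructed variable $Y$, which is precisely what the change-of-variables formula supplies.
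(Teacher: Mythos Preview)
Your proposal is correct and follows essentially the same route as the paper: compute $Q_Y$ from $Q_X$ via the transformation rule $Q_Y(p)=g(Q_X(p))$ together with the probability integral identity $F_X(Q_X(p))=p$, and then recognize the result as the BRF quantile function. If anything, you are more explicit than the paper about the monotonicity of $g$ and the uniqueness-by-quantile conclusion, but the underlying argument is the same.
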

\begin{proof}
As seen in the last section, the BRF does not have an analytic solution for neither the pdf nor the cdf, so we will prove $Y$ to follow a BRF by showing that its quantile function is precisely the BRF quantile function. 
Therefore, we aim to compute the quantile function of $Y$. Let $x$ be a realization of $X$ and $y$ the value of $Y$ for which $y=x[F_X(x)]^b$. 
We call $p$ the cumulative probability on these values of $X$ and $Y$,
$p=F_X(x)=F_Y(y)$. We thus have that $y=F^{-1}_Y(p)=Q_Y(p)$, therefore
$Q_Y(p)=x[F_X(x)]^b$. Likewise, $x=F^{-1}_X(p)=Q_X(p)$, so we get
$Q_Y(p)=Q_X(p)[F_X(x)]^b=Q_X(p)p^b$. But we know that $Q_X(p)=\frac{x_m}{(1-p)^a}$, so we finally have the expression for the quantile function of $Y$,
$$
Q_Y(p)=x_m \frac{p^b}{(1-p)^a},
$$
which is the quantile function of a BRF with parameters $x_m, a$ and $b$.
\end{proof}

Note that we can generalize this result by allowing a multiplicative coefficient $c>0$, $Y = cX[F_X(X)]^b$ with $X \sim \mbox{Pareto}(x_m,a)$, then $Y$ is again a BRF random variable with parameters $c\cdot x_m, a$ and $b$.

The straightforward interpretation of this result is as follows: a progressive contraction ``bends'' a power law, which is a straight line in the log-log scale, and transforms it into a DGBD/BRF distribution. This transformation is illustrated graphically in Fig.\ref{fig-prods}-A. We may consider the model effectively drags down the $y$-axis values, and the tail
is down more than the head. When considering normalized observations, this transformation can be understood as a redistribution of resources from the least to the most affluent entities.

Since $0 \leq p \leq 1$, then $y=x\cdot p^b$ is always lower than $x$. This will reduce the quantile values (e.g. the median) for $Y$ at all $p'$s. We can make $Y$ comparable to $X$, for example, by forcing them to have the same median. We can for example define $Y_2=(2p)^b$. The resulting quantile function for $Y_2$ is shown in Fig.\ref{fig-prods}-B in pink. This time, the quantile values for $Y_2$ are lower than that of $X$ when $p \le 0.5$, and higher when $p > 0.5$. This is the situation of $c=2^b$ discussed above. 

Alternatively, we can normalize the original variable $X'=X/\sum X$ (the ranks remain unchanged by \-nor\-ma\-li\-zing), then construct the modulated variable $Y=X'\cdot p^b$ and finally normalize $Y'=Y/\sum Y$. In this way we make $X'$ and $Y'$ comparable as they represent the relative sizes of observations with respect to the total. This normalization allows us to build a generating mechanism for BRF: start with $N$ power law distributed observations, $x_{(r=1)} > x_{(r=2)} > ... > x_{(r=N)}$, normalize ($x'$) and shrink them in such a way that the largest observation remains unchanged, the smallest one reduces reduces to almost zero and the rest is modulated by a power $b>0$ of rank $r$,
$$
\begin{array}{ccl}
y_{(1)} & = &  x'_{(1)},\\
 & & \\
\displaystyle y_{(r)} & = & x'_{(r)} \left(\frac{N+1-r}{N}\right) ^b \approx x'_{(r)} \left(1-\frac{r}{N}\right)^b,\\
 & & \\
\displaystyle y_{(N)} & = & \frac{1}{N} \approx 0 \quad \mbox{when}\quad N>>0,
\end{array}
$$

\noindent and finally normalize $y'=y/\sum y$. The normalized, modulated observations $y$ will follow a BRF distribution. Normalization also enables us to interpret this reduction as a form of redistribution.

\begin{figure}[th]
\begin{center}
\includegraphics[width=0.9\textwidth]{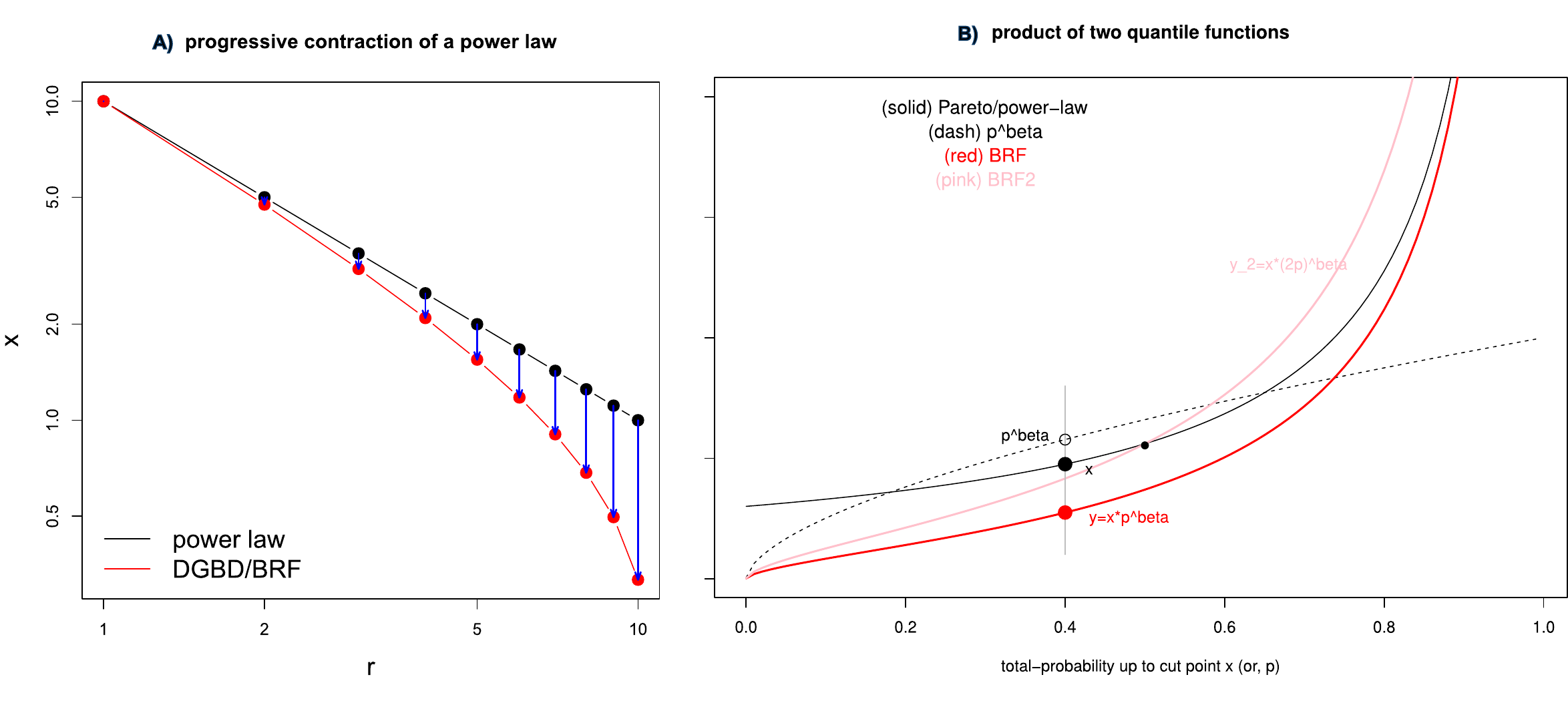}
\end{center}
\vspace{-1cm}
\caption{\textbf{A) Power law contraction.} A progressive contraction transforms a power law into a DGBD/BRF distribution. With proper normalization, this transformation can be interpreted as reallocation of resources from the smallest to the largest entities. \textbf{B) Quantile functions for power distribution, Pareto and BRF}. 
Quantile functions of a Pareto distribution $Q_{pareto}(p)=0.3/(1-p)^{0.9}$
(black solid line), a power-law distribution $Q_{power-law}(p)=p^{0.6}$
(black dashed line), their product $Q_{BRF}(p)=0.3 p^{0.6}/(1-p)^{0.9}$ (red solid line), and another product with an extra coefficient of $2^b$:
$Q_{BRF}(p)=0.3 (2p)^{0.6}/(1-p)^{0.9}$ (pink solid line). 
The quantile values at p=0.4 are marked for the Pareto distribution (black dot), the power-law modified(circle), and their product (red dot). The median (quantile value at p=0.5) is marked for the Pareto distribution and
the second BRF distribution (the two medians are the same).
This figure is modified from the Fig.1.12 in \cite{li1992random}. 
}
\label{fig-prods}
\end{figure}

Our modification process transforms a Pareto-distributed variable into a Beta Rank Function (BRF) variable through a regressive redistribution process, where resources are reallocated from poorer to richer entities. This operation, expressed as $x \rightarrow x \cdot p(x)^b$ with $b > 0$ controlling the redistribution shape, causes a differential contraction that is larger at the lower end and smaller at the higher end of the distribution. Intuitively, this means that starting from power-law observations, reallocating resources upward leads to a BRF-distributed dataset characterized by increased inequality and a breakdown of scale invariance.

\subsection{Numerical simulations}
\label{section_results_2}

To illustrate the generation of a Beta Rank Function (BRF) distributed variable, we created a dataset $X$ consisting of $N=1,000,000$ random observations drawn from a Pareto distribution (a power law) with parameters $x_{\min} = 1$ and shape parameter $a = 1$. These samples can represent quantities such as city sizes, word frequencies, or incomes, examples where power laws and their deviations commonly arise. We then applied the modulation (shrinkage) transformation described in the previous section to this dataset, using three distinct values for the parameter $b$ of the modulating (redistribution) function. This transformation introduces scale-dependent deviations from the original pure power law, with larger deviations occurring among smaller-valued observations. Figure \ref{fig-simulations} presents these results: the top panels display histograms of $\log X$ and the transformed variables $\log Y_i$, plotted on a semilog scale, where $Y_i = X [F_X(X)]^{b_i}$; the middle panels show the corresponding rank-size plots for $X$ and each $Y_i$; and the bottom panels illustrate the shape of the modulating function for each value of $b$.

\begin{figure}[!t]
\centering
\includegraphics[width=0.9\linewidth]{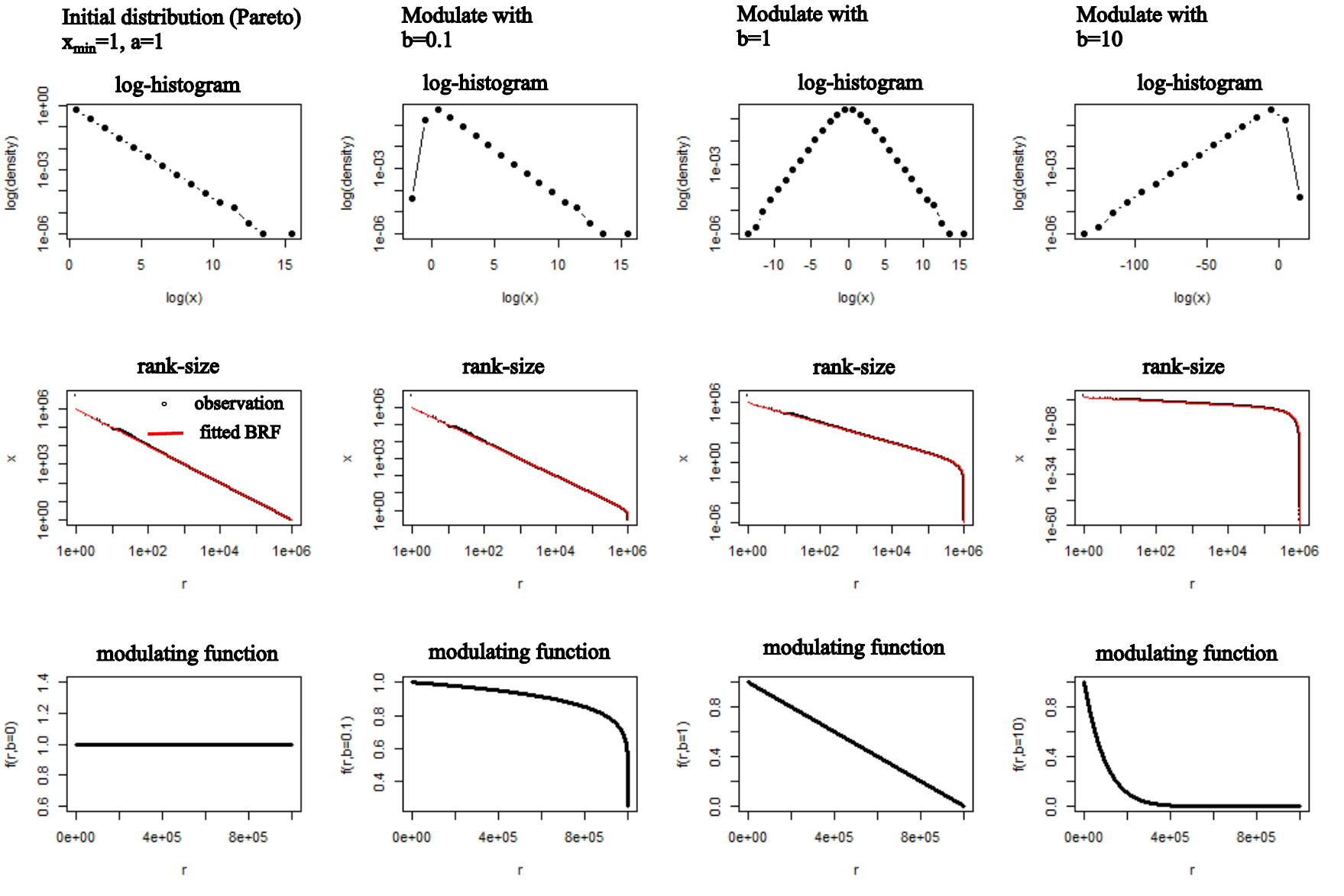}
\caption{\textbf{Effect of power law contraction.} On each column we show the shape of the modulation function with different values of $b$ (lower panels), as well as the resulting distributions, both on log-histogram (upper panels) and rank-size (medium panels) representation. All these are results from stochastic, numerical simulations.}
\label{fig-simulations}
\end{figure}
  
With the initial power law, we see that there is no peak on the histogram (Pareto distribution has a strictly decreasing pdf) and the rank-size function in log scale is a straight line. Then we introduce the transformation with a very low value of the parameter, $b=0.1$. In this case, the modulation is given by a concave function and deviations from the power law are only noticeable for the extreme low-value  observations. This introduces a peak on the histogram with only a few observations on the left side of it, which correspond to the large-rank deviations from power law. A curvature on the rank-size plot is observable only on the far end of the rank $r$. When the modulation parameter is of the same order of magnitude as the power law exponent (both equal to 1 in this case), the observations are evenly distributed on both sides of the mode. Consequently, the log-histogram is symmetric around the peak, and the deviation from the power law is noticeable as a visible elbow on the rank-size plot. Notice that this elbow corresponds to the peak of the histogram or the mode of the distribution. When the modulating parameter is large (in our case, $b=10$), the modulating function has a convex shape. As a result, deviations are large across almost the entire range of scales, yielding a distribution with a long left tail where only a few, very large observations remain unchanged. This causes the rank-size plot to exhibit a more pronounced elbow.

This mechanism of unequally cutting down observation values effectively increases inequality among the sample, because smallest observations get the largest shortening. This is confirmed by measuring the Gini index of our simulated sets, which are equal to 0.87. 0.88, 0.93 and 0.98 for $b=0, 0.1, 1$ and $b=10$ respectively. In this way, an initially unequal  distribution as the power law transforms into an even more unequal distribution. 

\subsection{Example: income distribution}
\label{section_results_3}

In many economies, individual or family income is known to follow a power law rank size function (equivalently, a Pareto distribution), perhaps with the exception near the lowest income end. Depending on the exponent of this power law, this is already a heavy tailed and uneven distribution, yielding the famous 80-20 rule (also called Pareto rule \cite{pareto1919manuale}) where the richest 20$\%$ of the population receive $80\%$ of the total income. But let's assume that, after we have this initial wealth distribution in some system, individuals are subject to a regressive taxation mechanism, where individuals with highest incomes are almost not affected at all, but people on the base of the pyramid lose almost all of their income. 

Countries with regressive tax systems include Norway, Sweden, Denmark, Switzerland and the Netherlands. Factors such as fixed consumption taxes, lack of access to credit, housing market dynamics or healthcare costs  may effectively lead to higher burden on low-income population. Even for countries with progressive tax system, such as USA, richest people may
have access to expert lawyers and accounts to utilize loop holes and end up
paying much less taxes.

According to Proposition 1, if we use the modulation $p^b=((N+1-r)/N)^b$ on a Pareto distributed variable, the resulting distribution will be a BRF distribution with parameters $a$ and $b$, where $a$ is the exponent of the initial power law and $b$ the parameter of the modulating function. The resulting distribution will present a larger inequality than the original one.

We illustrate this process with income data from two different countries: Italy and Mexico. Italy data come from the Nation Survey on Household Income and Wealth for the year 2020 (\url{https://www.bancaditalia.it/statistiche/tematiche/indagini-famiglie-imprese/bilanci-famiglie/}), while Mexico data come from the National Survey of Household Income and Expenditure for the year 2022 (\url{https://en.www.inegi.org.mx/programas/enigh/nc/2022/}). We fit the data to a BRF distribution by means of a non-linear regression on eq. \ref{DGBD} using the Levenberg-Marquardt algorithm \cite{more2006levenberg}. On both cases we observe en excellent agreement between the fitted function and the observed data. 

For the case of Italy, we get $a\approx b =0.47$. The interpretation is the following: there is an initial power law distribution with exponent $a=0.47$, with a Gini index of approximately 0.378. Regressive taxation or other similar mechanisms transform this distribution with a modulation exponent $b$ very similar to the power law exponent; therefore, large deviations from the power law are introduced and the resulting income distribution is even more unequal, with a Gini index of 0.620. 

There are now two characteristic scales, poor people are affected differently than rich people, so behaviour is no longer modeled by a single scaling exponent. In this case, distribution of logarithmic incomes is symmetric around the peak. Notice that the case $a=b$ correspond to the Lavalette distribution, which is very similar to the lognormal distribution but with heavier tales \cite{fontanelli2016beyond}. 

For the case of Mexico, we see $a=0.54$ and $b=0.92$. The original power law has an exponent similar to the Italian case, but now the modulation is more pronounced, producing a left-skewed distribution for logarithmic income, a more pronounced elbow on the rank-size plot and an even greater inequality, reflected on an increase in Gini index from 0.375 to 0.618. The process of power law transformation and the histogram and rank size plots for these two cases are shown on Fig. \ref{fig-ingresos}. 

\subsection{Example: urban area population distribution}
\label{section_results_4}

City population was one the first reported phenomenon to follow a Zipf's law, power law or one of the heavy-tail distributions \cite{auerbach,lotka,stewart,zipf-book,beckmann,berry,richardson,ioannides,soo,jiang}. Two classical explanations for this are 1) that cities grow according to a law of proportionate growth, known as Gibrat’s law \cite{gabaix,saichev}, and 2) an initial log-normal distribution being perturbed by a stochastic Yule process \cite{simon,vitanov}. Deviations from pure power law are often observed,  and there have been debates of whether power law or a lognormal distribution is a better model \cite{fazio}. In our own work, the distribution may also deviate from a Pareto/power-law distribution when the units of population are changed according to administrative needs
\cite{oscar-au}.

We examine relative population of the approximately 2,600 urban areas in the US, according to census data by the Census Bureau for 2020 (\url{https://www.census.gov/programs-surveys/geography/guidance/geo-areas/urban-rural/2020-ua-facts.html}). Again, we observe that BRF/DGBD is a better model than power law, because there is a deviation on the low-population regime. Fitted parameters are $a=1.23$ and $b=0.13$. The value of $a$ is in agreement with the well known Zipf's law, but we observe a small, yet important modulation with exponent $b=0.13$, which manifests in an elbow on the far-right end of the rank-size plot, and also on the emergence of a peak on the logarithmic histogram. This peak introduces two characteristic scales: a power law regime for the majority of urban areas (right side of the peak), but also a different behavior for a few low-populated areas (left side of the peak). 

In terms of the proposed redistribution mechanism, the interpretation is the following: after a proportionate growth process, a power law distribution on urban area population is produced; following this, there is process of unequal population modulation, in which larger cities receive proportionately more resources and grow more than those that receive less; there are selective migration patterns, with people moving from small to larger cities, causing small cities to shrink and large cities to grow, as manifested in deviations from initial power law in the distribution of relative sizes. In our numerical simulations, the initial power law has a Gini coefficient of 0.955, which slightly increases to 0.959 after redistribution.

\begin{figure}[htp]
\centering
\includegraphics[width=0.9\linewidth]{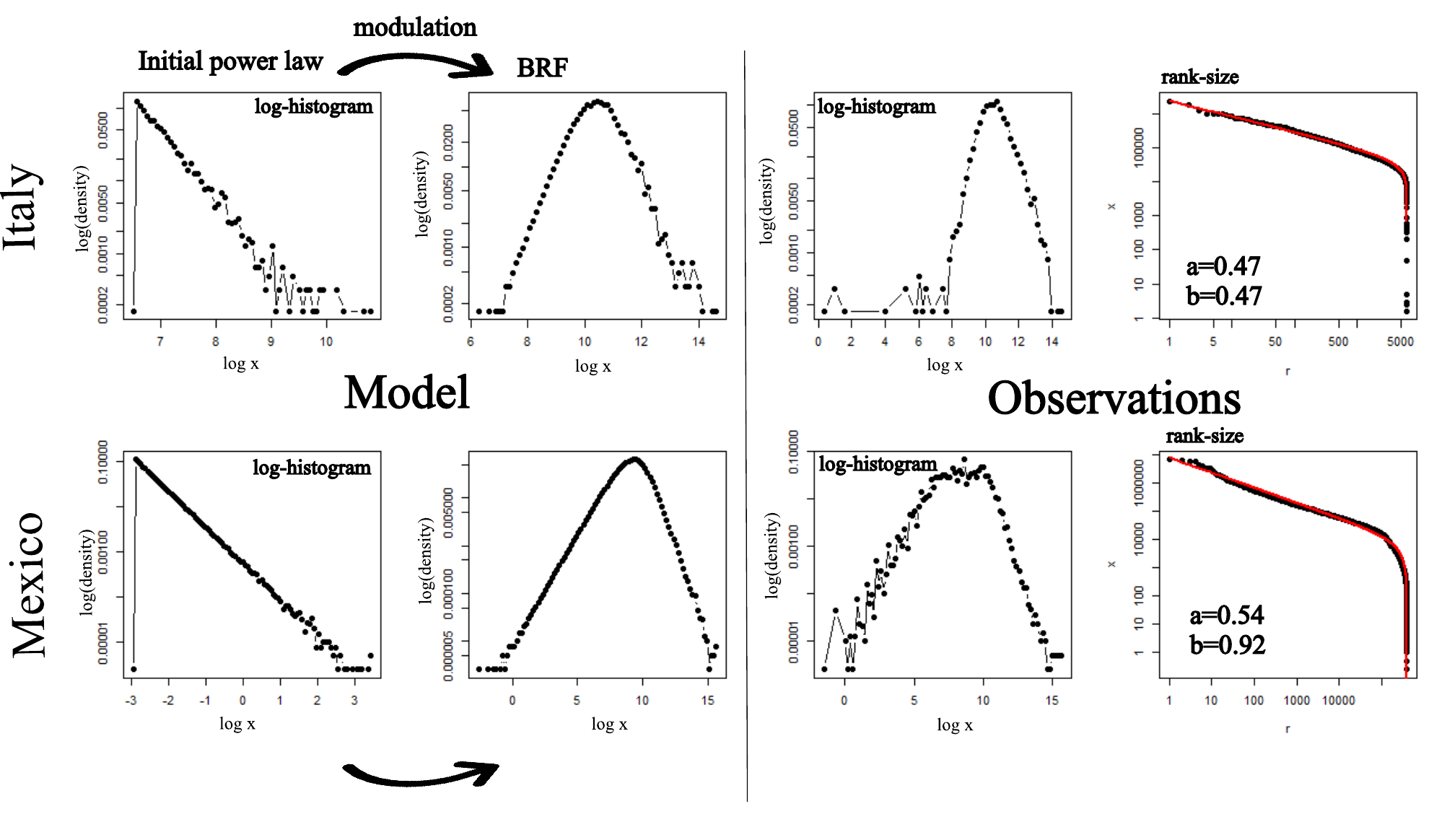}
\caption{\textbf{Modelling the power law transformation in income distribution}. On the left side we show simulated pure-power laws and how they transform into unimodal distributions after modulation or redistribution. On the right side we show the log-histogram of income data for Italy and Mexico, as well as observed rank-size plot and fitted BRF (in red).}
\label{fig-ingresos}
\end{figure} 

\begin{figure}[htp]
\centering
\includegraphics[width=0.9\linewidth]{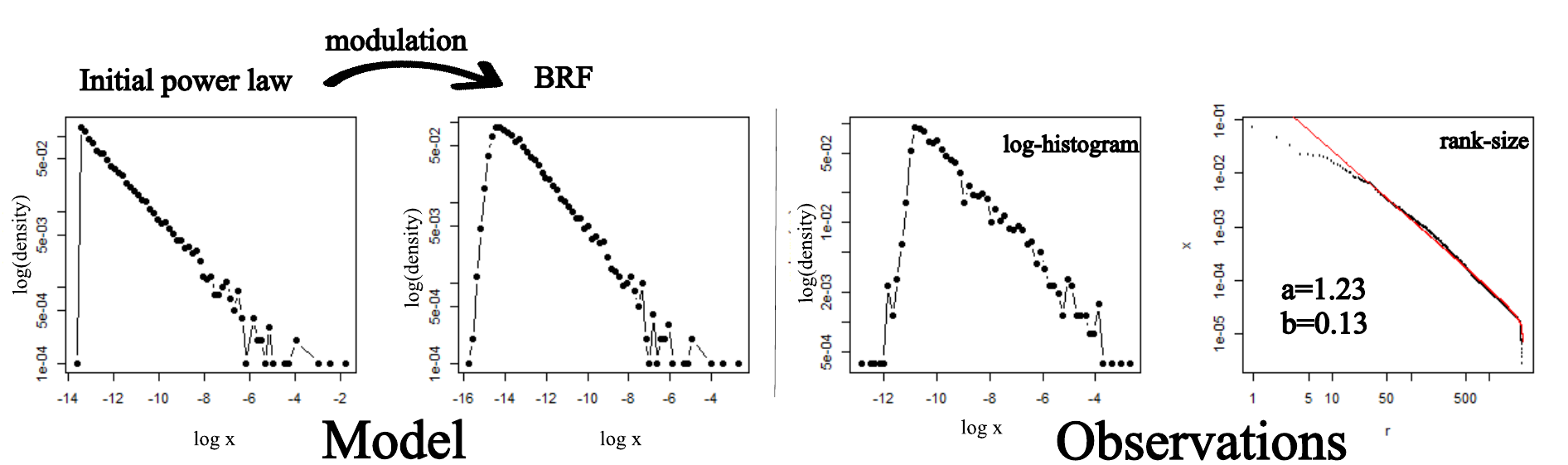}
\caption{\textbf{Modelling the power law transformation in population distribution}. On the left side we show a simulated pure-power law and how it transforms into a unimodal distribution after modulation or redistribution. On the right side we show the log-histogram of urban area population in the US, as well as observed rank-size plot and fitted BRF (in red).}
\label{fig-poblacion}
\end{figure}

These two are examples of phenomena where large observations follow a power law (they are heavy-tailed distributions), but exhibit two characteristic scales when considering all the observations: low-income levels demonstrate different behaviors compared to high-income levels, and sparsely populated areas differ in dynamics from densely populated ones. Each regime is governed by distinct dynamics, a phenomenon frequently observed in complex systems, which is effectively captured by our model and the BRF distribution.

\section{Discussion}

In this work we propose a two-step generation process to produce data that is distributed by BRF (Beta Rank Function) exactly. The first step can be any mechanism that produces a Pareto distributed data (inverse power-law distribution with a lower limit), whereas the second step is a differential modulation or redistribution that increase the inequality (in this sense, it is a ``regressive'' redistribution process). Since BRF has exact analytic expression only for its quantile function, not as pdf or cdf, our generation process, through multiplication in quantile function, seems to be a unique (one and only) way to produce the exact BRF.

We previously proposed a numerical procedure that produces data approximately following BRF: the split-merge model \cite{li2016size,oscar-au}. In the split-merge model, the largest value from the current
population of data is randomly split into two smaller values, and two smallest values combine to form a larger value. When the initial distribution is a Pareto distribution $(a=1, b=0)$, after
certain number of steps, the data will roughly follow a BRF distribution (with the exception of a few largest values). When the initial distribution is a uniform distribution ($a=0, b=1$ \cite{fontanelli2022beta}), the data will also converge to a BRF distribution with the exception of the largest values.

Interestingly, the split-merge model is progressive: it will decrease the inequality, by reducing the largest values and increasing the smallest values. The range of possible values, if starting from a Pareto distribution, will be greatly narrowed in a split-merge model. Split-merge model, in its simplest form, only affects three data points: the largest and the two smallest values. It takes as many steps as the number of data points to propagate changes to all data points. On the other hand, the modification stage in our new two-step model affects all data points in one single sweep.

From Fig.\ref{fig-prods}-A it can be seen that the modulation is carried out vertically to achieve the bending of a straight
line representing the power-law distribution. Another way to bend the straight line is to modulate the points
horizontally. This can be accomplished by a subset sampling, when we discard some samples while keeping others.
We have observed that stopwords follow a DGBD/BRF distribution (Li and Fontanelli, Rank-frequency plots of
stopwords do not follow Zipf’s distribution (2026), manuscript in preparation), and subset sampling may provide a
reasonable explanation of this result.

The BRF is strikingly similar to double Pareto distributions \cite{reed2001pareto, toda2011income}, asymmetric Laplace distribution \cite{kozubowski2001asymmetric} with log-transformed variables \cite{fontanelli2022beta} (skewed log Laplace distribution \cite{dixit2017estimation}), double Pareto lognormal distributions \cite{reed2004double, toda2017note}, etc. All these distributions, when both x- and y-axes are log-transformed, show two straight lines falling from the mode/peak of the distribution on both sides. Unlike some of these distributions, BRF makes a seamlessly smooth transition from one side to another, without a discontinuity in the slope, and without using too many parameters \cite{fontanelli2022beta}. We also note that the name double Pareto in the literature may only refer to the the change of slope value instead of change of the sign of the slope \cite{mitzenmacher2004dynamic}, which is not the situation described by BRF.

The Champernowne model in the field of econometrics uses a Markov transition process between logarithmic scale of income classes to generate either one-tail Pareto or two-tail Pareto distribution, under different conditions \cite{champernowne1953model, toda2012double}. If the transition probabilities are homogeneous, i.e., both upward and downward mobility probabilities are the same for all log-income classes, the resulting equilibrium distribution is Pareto. On the other hand, if the lower income level and upper level classes have different transition probabilities,
the distribution at the two ends can be different. Both Champernowne model and our model are not``mechanical'' models, by which we mean these are not about microscopic mechanisms. They only outline the requirement for achieving double Pareto or BRF distributions. Any detailed mechanism models, as long as they would achieve the required probability, could be
a realization of these outline models.

Unlike the one-sided power law, which is typically used to describe scale-free phenomena, two-sided distributions like the double Pareto or the BRF illustrate phenomena with two distinct characteristic scales. These are empirically observed as two scale-free functions with different exponents, divided by the peak of the distribution. This suggests that phenomena best described by the double Pareto or BRF are governed by different mechanisms operating at different scales. In the case of the BRF, this is reflected in the presence of two shape parameters, $a$ and $b$, which independently control the decay on the right and left sides of the peak, respectively. In the expansion-modification algorithm, the expansion step produces long streaks (associated with persistence), while the modification step produces short streaks (associated with change). These two different scales, large and short streaks, are each dominated by a distinct mechanism within the algorithm and, when fitted to the BRF, are represented by separate exponents: $a$ controls the decay in the large streak regime, while $b$ governs the short streak regime.

Similarly, the split-merge model consists of two distinct dynamics: splitting, which affects only the largest observations, and merging, which impacts the smallest observations. As the process iterates, the effect of splitting propagates through the large observations, while merging propagates through the small ones, eventually leading to a certain level of homogeneity. Once again, different scales (large and small sizes) are governed by different dynamics (splitting and merging). When fitting this process to the BRF, the parameter $b$ is associated with splitting, while $a$ corresponds to merging.

The proposed two-step model captures fundamental dynamics characteristic of complex systems by integrating scale-free generation with scale-dependent redistribution. The initial power law generation reflects universal mechanisms in complex adaptive systems, such as preferential attachment or multiplicative growth, that produce self-organizing, scale-invariant structures affecting all system components uniformly. This underlies the heavy-tailed behavior on the large-size scale, controlled by exponent $a$, which quantifies how extreme events or large elements persist within the system. 

In contrast, the redistribution step introduces feedback-driven, heterogeneous interactions that break pure scale invariance by differentially modulating small and large values. Exponent $b$, governing the redistribution's shape, embodies the system's internal regulatory or feedback processes that disproportionately impact smaller-scale elements, such as resource reallocation, competition, or local constraints. 

Our proposed mechanism and the BRF distribution with its two independent shape parameters together represent the emergence of multi-scale behavior in complex systems, where systemic feedback loops and constraints generate distinct characteristic scales. Together, they model the interplay between endogenous growth dynamics and systemic feedback mechanisms, capturing how rank distributions evolve through both universal scale-free processes and localized redistribution effects. This dual dynamic framework aligns with complexity science themes of emergence, multi-scale interactions, and feedback regulation, providing a mathematical representation of how complex systems balance growth and redistribution to produce observed rank-size patterns with two characteristic scales.

Our numerical simulations demonstrate that the redistribution step breaks the scale-free property of an initial power law, producing a curvature in the log-log rank-size plots that manifests as the characteristic elbow frequently observed in empirical data. This elbow corresponds to a mode or peak in the distribution, showing the presence of two distinct scale-free patterns in two different regime sizes, one on each side of the peak. In complex systems, this reflects the breakdown of pure scale invariance due to multi-scale interactions and feedback mechanisms that differentially affect system components across scales. A subtle downward bend indicates minor deviations from the power law, typically impacting the smaller-scale observations, whereas a pronounced, sharp elbow reveals large and systemic departures from scale-free behavior. Although such large deviations may be less apparent in rank-size plots, they become evident in histogram representations, emphasizing the role of redistribution dynamics in shaping emergent rank distributions and revealing the interplay of endogenous feedback and structural constraints fundamental to complex adaptive systems.

\section{Conclusions}


Power laws are a fundamental feature observed widely across complex systems, reflecting underlying scale invariance and mechanisms such as preferential attachment that drive the emergence of heavy-tailed distributions. These distributions often represent self-organizing processes where no characteristic scale dominates, resulting in straight lines on log-log rank-size plots.

However, empirical data frequently exhibit systematic deviations from ideal power laws, most notably manifesting as elbows or inflection points in log-log rank-size plots. Such elbows correspond to the presence of a mode or peak in the distribution, signaling a departure from pure scale invariance. This peak indicates the existence of two distinct characteristic scales, one governing the behavior on each side of the elbow, thereby breaking the continuous scale symmetry inherent in pure power laws.

The generative mechanism proposed here, combining a power law with a regressive redistribution process, provides a natural explanation for this phenomenon. By modulating the rank distribution through a feedback-driven redistribution function, the model analytically produces the BRF distribution (or its equivalent DGBD rank-size function), which captures this two-scale behavior explicitly. This is also the first exact, analytic derivation of the BRF/DGBD function. 

The BRF/DGBD thus emerges as the exact analytic outcome of this process, offering a rigorous mathematical framework for modeling rank distributions exhibiting concave rank-size plots in the log-log representation (the widely reported elbows). Consistent with extensive empirical literature, the BRF/DGBD fits diverse datasets across multiple domains in complex systems, including socio-economic, linguistic and biological phenomena, with remarkable accuracy.

Importantly, each shape parameter of the BRF controls one characteristic scale or side of the distribution’s peak, providing interpretable links between model parameters and systemic feedback or intrinsic heterogeneity. This dual-parameter structure reflects the interplay of endogenous feedback mechanisms and baseline system tendencies, advancing the understanding of how complex rank distributions evolve dynamically under bounded and feedback-regulated conditions.


The mechanism introduced in this work, together with the BRF distribution, not only deepens conceptual understanding but also opens new avenues for application across systems science. Potential future research could explore its utility in modeling socio-economic inequality, urban dynamics, and the behavior of networked systems, where multi-scale interactions and feedback processes are critical. By bridging a rigorous theoretical process with systemic interpretation, this framework paves the way for cross-disciplinary insights into the complex mechanisms shaping real-world phenomena.

\printbibliography

\end{document}